\newtheorem{lemma}{Lemma}
\newtheorem{theorem}{Theorem}
\newtheorem{definition}{Definition}
\def\cD{\mathcal{D}}
\newcommand{\nz}{\mathbb{N}} 
\newcommand{\cz}{\mathbb{C}} 
\def\gH{\mathfrak{H}}
\def\gQ{\mathfrak{Q}}
\def\gh{\mathfrak{h}}
\def\gF{\mathfrak{F}}
\def\gD{\mathfrak{D}}
\def\rd{\mathrm{d}}
\def\ri{\mathrm{i}}
\def\re{\mathrm{e}}
\def\p{{\mathbf{\hat p}}} 
\DeclareMathOperator{\supp}{supp}
\DeclareMathAlphabet{\mathsfsl}{OT1}{cmss}{m}{sl}
\newcommand{\tensor}[1]{\mathsfsl{#1}}
\title[DIRAC ON METRIC TREES]{THE DIRAC OPERATOR ON REGULAR METRIC TREES}
\author[X. LIU]{Xiao Liu \\ }
\begin{document}

\begin{abstract}
A metric tree is a tree whose edges are viewed as line segments of positive length. The Dirac operator on such tree is the operator which operates on each edge, complemented by the matching conditions at the vertices which were given by Bolte and Harrison \cite{BolteHarrison2003}. The spectrum of Dirac operator can be quite different, reflecting geometry of the tree. \\
We discuss a special case of trees, namely the so-called regular trees. They possess a rich group of symmetries. This allows one to construct an orthogonal decomposition of the space $L^2(\Gamma)$ which reduces the Dirac. Based upon this decomposition, a detailed spectral analysis of Dirac operator on the regular metric trees is possible.
\end{abstract}

\maketitle

\section{INTRODUCTION}
\label{sec1}
Over the last decade, there were several papers about operator analysis on graph theory. However Naimark and Solomyak \cite{NaimarkSolomyak2001} took a first featured step in this direction on the Laplace operator.

\quad

\section{REGULAR ROOTED METRIC TREES}
\label{sec2}

\subsection{Geometry of a tree\label{ss2.1}}
\ \\

Let $\Gamma$ be a rooted tree. Assume that $o$ is the root, $V=V(\Gamma)$ is the set of vertices and $E=E(\Gamma)$ is the set of edges of $\Gamma$. Suppose that $\#V=\#E=\infty$. Each edge $e$ of a metric tree is regarded as a non-degenerate line segment of length $|e|$. The distance $d(x,y)$ between any two points $x, y \in \Gamma$, and thus the metric topology on $\Gamma$, is introduced in a natural way. Everywhere below, $|x|$ stands for $d(x,o)$. A subset $\Omega \subset \Gamma$ is compact if and only if it is closed and has non-empty intersections with only a finite number of edges.\\
For any two points $x,y \in \Gamma$ there exists a simple polygonal path in $\Gamma$ which starts at $x$ and terminates at $y$. This path is unique and we denote it by $$<x,y>:=\{z\in\Gamma:x\preceq z\preceq y\},$$ where we write $x\preceq y$ if $|y|=|x|+d(x,y)$.\\ We write $x \prec y$ if $x \in <o,y>$ and $x \ne y$.  \\
For any vertex $v$ its generation $gen(v)$ is defined as $$gen(v)=\#\{x\in V(\Gamma):x\prec v\}.$$
In particular, $v=o$ is the only vertex such that $gen(v)=0$. For any edge emanating from the vertex $v$ (which means that $e=<v,w>$ and $v\prec w$) we define its generation as $gen(e):=gen(v)$. \\
The branching number $b(v)$ of a vertex $v$ is defined as the number of edge emanating from $v$. We assume that $gen(v)<\infty$ for any $v$, $b(o)=1$ and $b(v)>1$ for $v\ne o$. We denote by $e{^1_v}, ..., e{^{b(v)}_v}$ the edges emanating from $v\in V$, and by $e{^-_v}$ the only edge which terminates at a vertex $v\ne o$.

\begin{definition}
 \label{def:2.1}
 A tree $\Gamma$ is called to be regular if all the vertices of the same generation have equal branching numbers, and all the edges of the same generation are of the same length.
\end{definition}

In this paper we consider regular trees only. Clearly, any regular tree is fully determined by specifying two number sequences (generating sequences) $\{b_n\}=\{b_n(\Gamma)\}$ and $\{t_n\}=\{t_n(\Gamma)\}$, $n=0,1,...$ s. t. $$b(v)=b_{gen(v)},\  |v|=t_{gen(v)} \quad \forall v\in V(\Gamma).$$
According to our assumptions, one has $b_0=1$ and $b_n \geq 2$ for any $n>0$. It is clear that $t_0=0$ and the sequence $\{t_n\}$ is strictly increasing, and we write
\begin{equation}
  \label{eq:2.1-1}
  h(\Gamma)=\lim_{n\to \infty}t_n=\sup_{x\in \Gamma}|x|.
\end{equation}
It is natural to refer to $h(\Gamma)$ as the height of $\Gamma$.

Another useful characteristics of the regular tree is its branching function. The branching function $g_{\Gamma}(t)$ of $\Gamma$ is defined by $$g_{\Gamma}(t)=\#\{x\in \Gamma: |x|=t\}.$$
Obviously,
\begin{equation}
  \label{eq:2.1-2}
  g_{\Gamma}(0)=1; \qquad g_{\Gamma}(t)=b_0...b_n,\  t_n<t\leq t_{n+1},\  n=0,1,...
\end{equation}
We also introduce the reduced height of $\Gamma$
\begin{equation}
  \label{eq:2.1-3}
  L(\Gamma)=\int_0^{h(\Gamma)}\frac {\rd t}{g_{\Gamma}(t)}.
\end{equation}
Clearly $h(\Gamma)<\infty$ implies $L(\Gamma)<\infty$. For the trees of infinite height both $L(\Gamma)<\infty$ and $L(\Gamma)=\infty$ is possible.
\\
It follows from here that the natural measure $\rd x$ on $\Gamma$ is induced by the Lebesgue measure on the edges. The spaces $L^p(\Gamma)$ are understood as $L^p$-spaces with respect to this measure. We denote by $|\Omega|$ the measure of a (measurable) subset $\Omega \subset \Gamma$ and call the number $|\Gamma|$ the total length of $\Gamma$. It is clear that $d(x,y)=|<x,y>|$ for any pair of points $x,y \in \Gamma$ and that $|\Gamma|=\int_\Gamma g_{\Gamma}(t)\rd t$.

\subsection{Special subtrees of $\Gamma$\label{ss2.2}}
\ \\

Subtrees $T \subset \Gamma$ of the following two types play a special part in the further analysis. For any vertex $v$ and for any edge $e=<v,w>$, $v \prec w$ we set $$T_v=\{x\in\Gamma: x\succeq v\},\qquad T_e=e\cup T_w.$$
In particular, $T_o=\Gamma$ and $$T_v=\cup_{1\leq j\leq b(v)} T_{e{^j_v}}, \quad \forall v \in V(\Gamma).$$
Let $T \subset \Gamma$ be a subtree. The branching function $g_T(t)$ of $T$ is defined by $$g_{T}(t)=\#\{x\in T: |x|=t\},\  0\leq t<h(T).$$
Due to the regularity of $\Gamma$, all the subtrees $T_e$, $gen(e)=k$ can be identified with a single tree $\Gamma_k$ whose generating sequences are $$b_0(\Gamma_k)=1, \quad b_n(\Gamma_k)=b_{k+n}(\Gamma), \ n\in \nz;$$
$$t_0(\Gamma_k)=0, \quad t_n(\Gamma_k)=t_{k+n}(\Gamma)-t_k(\Gamma), \ n\in \nz.$$

An important introduction is followed from here that the branching function $g_{\Gamma_k}$ is given by
\begin{equation}
  \label{eq:2.4}
g_{\Gamma_k}(t)=\frac{g_{\Gamma}(t_k+t)} {b_0...b_k}=\frac{g_{\Gamma}(t_k+t)} {g_{\Gamma}(t_k+)}, \quad k=0,1,...
\end{equation}
Note also that any subtree $T_v$, $gen(v)=k$, can be identified with the union of $b_k$ copies of the tree $\Gamma_k$ emanating from the common root $v$.

\quad

\section{THE DIRAC OPERATOR ON A REGULAR TREE}
\label{sec3}

The notion of Dirac operator on any metric graph, in particular on a tree, is well known. Still, for the sake of completeness we present here the variational definition of the Dirac operator with matching conditions \cite{BolteHarrison2003} on a tree.

\subsection{The Dirac operator\label{ss3.1}}
\ \\

First we say that a scalar-valued function $f$ on $\Gamma$ belongs to the Sobolev space $H^1=\bigoplus_{j=1}^{J}H^1((0,L_j))$ if $f$ is continuous, $f=\begin{pmatrix}
f_1 \\
f_2
\end{pmatrix}\upharpoonright e\in H^1(e)\otimes\cz^2$ for each edge $e$, and $$\|f\|^2_{H^1}:=\sum_{j=1}^J \left[\int_0^{L_j} (|f'(x)|^2+|f(x)|^2)\rd x \right]<\infty.$$
Let $H^1_c$ stand for the set of all functions from $H^1$ having compact support.

Next we define
\begin{equation}
  \label{eq:3.1-1}
   D_c:= \alpha\cdot \frac c\ri\nabla + c^2\beta
\end{equation}
 is the Dirac operator. Note that we are using atomic units in this
research, i.e., $m_{el}=\hbar=el=1$. As usual, the two matrices
$\alpha$ and $\beta$ are the Dirac matrices for two-component spinors
in standard representation, explicitly

$$
  \alpha=\begin{pmatrix}
            0 & -\ri \\
            \ri & 0
            \end{pmatrix}, \qquad
  \beta=\begin{pmatrix}
            1 & 0 \\
            0 & -1
            \end{pmatrix},$$

Let $\gD:= \bigoplus_{j=1}^{J}(H^1((0,L_j))\otimes\cz^2)$,
where $j$ turns over all bonds. \\
At the root we have the boundary condition $f (o)=0$, where $f \in \gD$; for the matching conditions at $v\ne 0$, we can construct matrices
\begin{equation}
  \label{eq:3.1-4}
  \mathbb{A}^{(k)}=
  \begin{pmatrix}
    1 & -1 & 0 & \cdots & 0 & 0 \\
    0 & 1 & -1 & \cdots & 0 & 0 \\
    0 & 0 & 1 & \cdots & 0 & 0  \\
    \vdots & \vdots & \vdots & \ddots & \vdots & \vdots \\
    0 & 0 & 0 & \cdots & 1 & -1 \\
    0 & 0 & 0 & \cdots & 0 & 0
  \end{pmatrix}_{v_k\times v_k}
\quad  \mbox{and} \quad
  \mathbb{B}^{(k)}=
  \begin{pmatrix}
    \multicolumn{4}{c}{\raisebox{0.7ex}[0pt]{\huge0}} \\
    1 & 1 & \cdots & 1
  \end{pmatrix}_{v_k\times v_k}.
\end{equation}
We have the condition
\begin{eqnarray}
  \label{eq:3.1-5}
  & \mathbb{A}^{(k)}f{^{(k)}_1}+\mathbb{B}^{(k)}f{^{(k)}_2}=0 & \\ \mbox{with} \qquad & rank(\mathbb{A}, \mathbb{B})=2v_k \qquad \mbox{and} & \qquad \mathbb{A}^{(k)}{\mathbb{B}^{(k)}}^T=\mathbb{B}^{(k)}{\mathbb{A}^{(k)}}^T  \nonumber
\end{eqnarray}
which satisfies \cite[Formula (4.15)]{BolteHarrison2003},
where
\begin{equation}
  \label{eq:3.1-2}
  f^{(k)}=\begin{pmatrix}
f{^{(k)}_1} \\
f{^{(k)}_2}
\end{pmatrix} \in  \bigoplus_{j=1}^{v_k}(H^1((0,L_j))\otimes\cz^{2v_k})
\end{equation}
of boundary values at $k$,
\begin{equation}
  \label{eq:3.1-7}
  f{^{(k)}_1}=\begin{pmatrix}
  f^1_1(0) \\
  f^2_1(0) \\
  \vdots \\
  f^{b_k}_1(0) \\
  f^{v_k}_1(L_{v_k})
  \end{pmatrix} \qquad
  f{^{(k)}_2}=\begin{pmatrix}
  -f^1_2(0) \\
  -f^2_2(0) \\
  \vdots \\
  -f^{b_k}_2(0) \\
  f^{v_k}_2(L_{v_k})
  \end{pmatrix}
\end{equation}
and $v_k=b_k+1$. \\

Without loss of generality, we can choose the space
\begin{equation}
  \label{eq:3.1-6}
  \gQ:= [\chi_{(0,\infty)}(D_c)]\bigoplus_{j=1}^{J}(H^1((0,L_j))\otimes\cz^2),
\end{equation}
where
\begin{equation}
  \label{eq:3.1-3}
   \chi_{(0,\infty)}(D_c):=
         \begin{cases}
            1,  & (f, D_c f)>0;\\
            0,  & (f, D_c f)\leq0
         \end{cases}
\end{equation} for any $f \in \gD$, s. t. $f^{(k)} \in [\chi_{(0,\infty)}(D_c)]\bigoplus_{j=1}^{v_k}(H^1((0,L_j))\otimes\cz^{2v_k})$.

\subsection{Reduction of the Dirac\label{ss3.2}}
\ \\

Our further analysis is  based upon an orthogonal decomposition of the space $L^2(\Gamma)$ which,
for the case of regular trees, reduces the Dirac. Let us describe this decomposition. \\
Given a subtree $T \subset \Gamma$, we say that a function $u \in L^2(\Gamma)$ belongs to the set
(a closed subspace) $\gF_T$ if $$u(x)=0 \  \mbox{for \  ~$x\notin T$~}; \qquad u(x)=u(y) \qquad \mbox{if\  ~$x,y \in T$~ \  and \  ~$|x|=|y|$~}.$$
In particular, $\gF_\Gamma$ consists of all symmetric (i.e. depending only on $|x|$) functions from $L^2(\Gamma)$.\\
We need the subspaces $\gF_T$ associated with the subtrees $T_e$ and $T_v$, introduced in Subsection \ref{ss2.2}. To simplify our notations, we shall write
$\gF_e$, $\gF_v$ instead of $\gF_{T_e}$, $\gF_{T_v}$. It is clear that for each vertex $v \ne o$ the subspaces $\gF_{e{^j_v}}, j=1,...,b(v)$ are mutually
orthogonal and their orthogonal sum $\widetilde{\gF_v}$ contains $\gF_v$. Denote $$\gF{^\prime_v}=\widetilde{\gF_v}\ominus \gF_v.$$

\begin{theorem}
 \label{thm:3.2-1}
 Let $\Gamma$ be a regular metric tree and $b(o)=1$. Then the subspaces $\gF{^\prime_v}$, $v\in V(\Gamma)$ are mutually orthogonal and orthogonal to $\gF_\Gamma$. Moreover,
 \begin{equation}
  \label{eq:3.2-1}
  L^2(\Gamma)\otimes\cz^2=\left(\gF_\Gamma\otimes\cz^2\right) \oplus \sum_{v\in V(\Gamma)}\oplus \left(\gF{^\prime_v}\otimes\cz^2\right).
 \end{equation}

\end{theorem}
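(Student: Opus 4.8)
The plan is to prove this by a standard orthogonal-decomposition argument that mirrors the scalar Laplacian case of Naimark--Solomyak, observing that tensoring with $\cz^2$ does not affect the orthogonality combinatorics. First I would establish the orthogonality claims. The $\cz^2$ factor is inert: if $\langle u,w\rangle_{L^2(\Gamma)}=0$ for the scalar components, then $\langle u\otimes\xi,\,w\otimes\eta\rangle=0$ after summing, so it suffices to prove all claims for the scalar spaces $\gF_v'$ and $\gF_\Gamma$ inside $L^2(\Gamma)$.

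\textbf{Orthogonality of $\gF_v'$ to $\gF_\Gamma$.} Fix $v\ne o$ with $\mathrm{gen}(v)=k$. A function $u\in\widetilde{\gF_v}=\bigoplus_j\gF_{e_v^j}$ is supported on $T_v$ and, restricted to each $T_{e_v^j}$, depends only on $|x|$; write $u\!\upharpoonright\! T_{e_v^j}(x)=c_j(|x|)$. The subspace $\gF_v$ consists of those $u$ with all $c_j$ equal, and $\gF_v'$ is its orthogonal complement inside $\widetilde{\gF_v}$, i.e. those $u$ with $\sum_j c_j(t)\equiv 0$ (after accounting for the equal lengths $g_{\Gamma_k}$ of the copies). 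Now take $w\in\gF_\Gamma$, so $w(x)=\psi(|x|)$. Then $\langle u,w\rangle=\int_{t_k}^{h} \overline{\psi(t)}\Big(\sum_j c_j(t-t_k)\Big)\,g_{\Gamma_{k}}(t-t_k)\,\rd t = 0$ because the bracketed sum vanishes pointwise. The same computation, with $\psi$ replaced by the profile of a function in $\gF_{v'}'$ for $v'\ne v$, gives mutual orthogonality: if $v'$ is not comparable to $v$ the supports $T_v$, $T_{v'}$ are disjoint; if say $v'\prec v$, then on $T_v$ a function in $\gF_{v'}'$ is constant on the single slice $|x|=t$ through $T_v$, so again the integrand contains the factor $\sum_j c_j(t-t_k)=0$; the remaining case $v\prec v'$ is symmetric.

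\textbf{Completeness (the decomposition).} This is the main obstacle and I would argue it by exhausting $L^2(\Gamma)$ with finite-generation approximations. Let $P$ denote orthogonal projection onto the closed span of $\gF_\Gamma$ and all $\gF_v'$. Given $f\in L^2(\Gamma)$ supported on a compact set (hence on finitely many edges, so on the finite subtree of generations $\le N$), I claim $f\in\mathrm{ran}\,P$ by downward induction on generation. Concretely, for each vertex $v$ of generation $N-1$ whose descendant edges carry $f$, decompose $f\!\upharpoonright\!\widetilde{\gF_v}$-slices: the ``average over the $b(v)$ children'' part lands in $\gF_v\subset\widetilde{\gF_{v'}}$ for the parent $v'$, and the ``mean-zero'' part lands in $\gF_v'$. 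Iterating up to the root, where $b(o)=1$ forces the final averaged piece to be a genuinely radial function in $\gF_\Gamma$, expresses $f$ as a finite orthogonal sum of elements of $\gF_\Gamma$ and the $\gF_v'$. Since compactly supported functions are dense in $L^2(\Gamma)$ and $P$ is a projection, $P=I$, which is \eqref{eq:3.2-1}. The one point needing care is the bookkeeping of the identification of $T_{e_v^j}$ with the model tree $\Gamma_k$ and the associated branching function \eqref{eq:2.4}, so that ``average'' and ``mean-zero'' are genuinely $L^2$-orthogonal with the correct weights; granting that, the induction is routine.

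Finally I would remark that, although the statement is purely Hilbert-space geometric and makes no reference to $D_c$, its point (used in the sequel) is that $D_c$ commutes with $P$ because $D_c$ acts edgewise by the same first-order expression on every edge and the matching conditions \eqref{eq:3.1-5} are symmetric under permuting the children $e_v^1,\dots,e_v^{b(v)}$; hence each $\gF_\Gamma\otimes\cz^2$ and $\gF_v'\otimes\cz^2$ is invariant, reducing the Dirac operator as claimed in the subsection title.
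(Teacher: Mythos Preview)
Your argument is correct and in substance coincides with what the paper relies on, but the paper's own proof is a one--line citation: it invokes the scalar decomposition $L^2(\Gamma)=\gF_\Gamma\oplus\sum_v\gF'_v$ from Naimark--Solomyak \cite[Formula (2.15)]{NaimarkSolomyak2001} and then tensors both sides with $\cz^2$ by distributivity. What you have written is essentially a reproduction of the Naimark--Solomyak proof of that scalar formula (the mean/mean-zero splitting at each vertex, iterated from the leaves to the root), followed by the same tensoring observation. So the route is not genuinely different, only more self-contained; the payoff is that your version does not depend on an external reference, at the cost of some bookkeeping that the paper outsources. Your closing remark about permutation symmetry of the matching conditions and invariance of the summands under $D_c$ is not part of the theorem as stated, but it correctly anticipates how the decomposition is used in Theorem~\ref{thm:3.2-2} and Lemma~\ref{lm:3.2-2}.
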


\begin{proof}
The below result 
\begin{equation}
  \label{eq:3.2-6}
  L^2(\Gamma)=\gF_\Gamma \oplus \sum_{v\in V(\Gamma)}\oplus \gF{^\prime_v}
\end{equation}
is an obviously consequence of \cite[Formula (2.15)]{NaimarkSolomyak2001}. \\
According to Equation (\ref{eq:3.2-6}) and the distributive law, we can get that
\begin{multline}
 \label{mu:3.2-2}
 L^2(\Gamma)\otimes\cz^2 = \left(\gF_\Gamma \oplus \sum_{v\in V(\Gamma)}\oplus \gF{^\prime_v}\right)\otimes\cz^2 \\ = \left(\gF_\Gamma\otimes\cz^2\right) \oplus \sum_{v\in V(\Gamma)}\oplus \left(\gF{^\prime_v}\otimes\cz^2\right)
\end{multline}
\end{proof}

Now we have to describe the parts of the Dirac in the subspaces $\gF_\Gamma$ and $\gF{^\prime_v}$. For this purpose, along with the operator $D_c$ on $\Gamma\otimes\cz^2$ let us consider the Dirac operator on each tree $\Gamma_k\otimes\cz^2$ defined in Subsection \ref{ss2.2}. Below we denote this operator by $D_k$. Consider also its part $\cD_k=D_k\upharpoonright\gF_{\Gamma_k}\otimes\cz^2$ which is a natural analog of the operator $\cD_0=D_c\upharpoonright\gF_\Gamma\otimes\cz^2$.\\

By \cite[Page 325]{NaimarkSolomyak2001}, we get that the subspace $\gF_{e{^{<b(v)>}_v}}$ coincides with $\gF_v$. \\
The following holds
\begin{lemma}
 \label{lm:3.2-1}
\begin{equation}
 \label{eq:3.2-2}
  \gF{^\prime_v}=\gF_{e{^{<1>}_v}}\oplus...\oplus\gF_{e{^{<b_k-1>}_v}},
\end{equation}
where $\gF_{e{^{<j>}_v}}, j=1, ..., b(v)$ is defined as \cite{NaimarkSolomyak2001}.
\end{lemma}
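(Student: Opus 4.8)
The plan is to realise the ambient space $\widetilde{\gF_v}$ as a tensor product $\gF_{\Gamma_k}\otimes\cz^{b_k}$, with $k=gen(v)$, to identify $\gF_v$ inside it as the ``diagonal'' copy of $\gF_{\Gamma_k}$, and then to read off $\gF{^\prime_v}$ as the orthogonal sum of the $b_k-1$ remaining tensor factors attached to an orthonormal basis of $\cz^{b_k}$ adapted to the diagonal vector $\mathbf{1}=(1,\dots,1)$. These remaining factors are exactly the subspaces $\gF_{e{^{<1>}_v}},\dots,\gF_{e{^{<b_k-1>}_v}}$ of \cite{NaimarkSolomyak2001}.

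First I would fix $v\neq o$, set $k=gen(v)$, and recall from Subsection \ref{ss2.2} that, by regularity, each subtree $T_{e{^j_v}}$ ($1\le j\le b_k$, since $gen(e{^j_v})=gen(v)=k$) can be identified with $\Gamma_k$ through an isometry $\varphi_j$ sending the root of $\Gamma_k$ to $v$ and preserving the distance to $v$. Because every $x\in T_{e{^j_v}}$ satisfies $x\succeq v$, one has $|x|=t_k+d(x,v)$ there, so under $\varphi_j$ the function $|\cdot|$ on $T_{e{^j_v}}$ corresponds to $t_k+|\cdot|$ on $\Gamma_k$; consequently each $\gF_{e{^j_v}}$ is unitarily equivalent to $\gF_{\Gamma_k}$. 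Since $T_{e{^i_v}}\cap T_{e{^j_v}}=\{v\}$ is a null set for $i\neq j$, the subspaces $\gF_{e{^1_v}},\dots,\gF_{e{^{b_k}_v}}$ are mutually orthogonal, and writing an element of $\widetilde{\gF_v}=\bigoplus_{j=1}^{b_k}\gF_{e{^j_v}}$ as $(u_1,\dots,u_{b_k})$ with $u_j\in\gF_{e{^j_v}}\cong\gF_{\Gamma_k}$ gives the identification $\widetilde{\gF_v}\cong\gF_{\Gamma_k}\otimes\cz^{b_k}$. The crucial step is then to check that under this identification $\gF_v$ is precisely $\gF_{\Gamma_k}\otimes\langle\mathbf{1}\rangle$: a function $u\in L^2(\Gamma)$ lies in $\gF_v$ iff it is supported in $T_v=\bigcup_j T_{e{^j_v}}$ and constant on each level set $\{x\in T_v:|x|=t\}$, and since the $\varphi_j$ agree on distances to $v$ and jointly cover $T_v$, this says exactly that $u\circ\varphi_1=\dots=u\circ\varphi_{b_k}$ in $\gF_{\Gamma_k}$, i.e.\ that $(u_1,\dots,u_{b_k})=(w,\dots,w)=w\otimes\mathbf{1}$. (No continuity at $v$ is required, $\gF_v$ being an $L^2$-subspace.) Hence $\gF{^\prime_v}=\widetilde{\gF_v}\ominus\gF_v\cong\gF_{\Gamma_k}\otimes\mathbf{1}^{\perp}$, with $\mathbf{1}^{\perp}\subset\cz^{b_k}$ of dimension $b_k-1$.

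To finish I would take any orthonormal basis $w^{(1)},\dots,w^{(b_k-1)}$ of $\mathbf{1}^{\perp}$ in $\cz^{b_k}$, put $w^{(b_k)}=\mathbf{1}/\sqrt{b_k}$, and set $\gF_{e{^{<j>}_v}}:=\gF_{\Gamma_k}\otimes\langle w^{(j)}\rangle$, $j=1,\dots,b_k$; this is precisely the family obtained from $\gF_{e{^1_v}},\dots,\gF_{e{^{b_k}_v}}$ by the unitary change of edge-indices used in \cite{NaimarkSolomyak2001}, and $\gF_{e{^{<b_k>}_v}}=\gF_v$ is the fact quoted just before the lemma from \cite[Page 325]{NaimarkSolomyak2001}. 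Since $\cz^{b_k}=\bigoplus_{j=1}^{b_k}\langle w^{(j)}\rangle$ orthogonally with $\langle w^{(b_k)}\rangle=\langle\mathbf{1}\rangle$, tensoring with $\gF_{\Gamma_k}$ yields $\widetilde{\gF_v}=\bigoplus_{j=1}^{b_k}\gF_{e{^{<j>}_v}}$ and $\gF_{e{^{<b_k>}_v}}=\gF_v$, whence $\gF{^\prime_v}=\bigoplus_{j=1}^{b_k-1}\gF_{e{^{<j>}_v}}$, which is \eqref{eq:3.2-2}. I expect the only genuinely delicate point to be the middle step — turning the symmetry condition defining $\gF_v$ into the diagonal condition on $\widetilde{\gF_v}$ — which is exactly where the regularity of $\Gamma$ (equal branching numbers and equal edge lengths within each generation, so that the compatible isometries $\varphi_j$ exist) is used.
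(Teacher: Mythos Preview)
Your argument is correct and is exactly the Naimark--Solomyak construction unpacked in full: the tensor identification $\widetilde{\gF_v}\cong\gF_{\Gamma_k}\otimes\cz^{b_k}$, the recognition of $\gF_v$ as the diagonal line $\gF_{\Gamma_k}\otimes\langle\mathbf{1}\rangle$, and the passage to an orthonormal basis of $\cz^{b_k}$ adapted to $\mathbf{1}$. The paper itself offers no proof of the lemma at all; it simply records the statement and points to \cite{NaimarkSolomyak2001} for the definition of the spaces $\gF_{e{^{<j>}_v}}$ and for the fact $\gF_{e{^{<b_k>}_v}}=\gF_v$, treating \eqref{eq:3.2-2} as an immediate consequence of that reference. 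So your write-up is not a different route but rather a self-contained expansion of what the paper leaves implicit. One minor remark: in \cite{NaimarkSolomyak2001} the basis $w^{(1)},\dots,w^{(b_k)}$ is the specific discrete-Fourier basis (cf.\ the matrix $U$ in \eqref{eq:3.2-13}), not an arbitrary orthonormal basis of $\mathbf{1}^\perp$; this does not affect the validity of \eqref{eq:3.2-2}, but you may want to say so if you want your $\gF_{e{^{<j>}_v}}$ to coincide literally with theirs.
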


\begin{theorem}
 \label{thm:3.2-2}
 Let $v \in V(\Gamma)$ and $gen(v)=k>0$. Then the operator $D_c\upharpoonright\gF{^\prime_v}\otimes\cz^2$ is unitarily equivalent to the orthogonal sum of $(b_k-1)$ copies of the operator $\cD_k$.
\end{theorem}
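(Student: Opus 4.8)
The plan is to construct an explicit unitary map from $\gF'_v\otimes\cz^2$ onto the orthogonal sum of $(b_k-1)$ copies of $\gF_{\Gamma_k}\otimes\cz^2$ and to verify that it intertwines $D_c\upharpoonright\gF'_v\otimes\cz^2$ with the corresponding orthogonal sum of copies of $\cD_k$. First I would use Lemma \ref{lm:3.2-1}, which gives the decomposition $\gF'_v=\gF_{e^{<1>}_v}\oplus\cdots\oplus\gF_{e^{<b_k-1>}_v}$; tensoring with $\cz^2$ and distributing as in the proof of Theorem \ref{thm:3.2-1} yields $\gF'_v\otimes\cz^2=\bigoplus_{j=1}^{b_k-1}\bigl(\gF_{e^{<j>}_v}\otimes\cz^2\bigr)$. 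So it suffices to show that for each $j$ the operator $D_c$ restricted to $\gF_{e^{<j>}_v}\otimes\cz^2$ is unitarily equivalent to $\cD_k$, and that these restrictions are genuine reductions (i.e.\ $D_c$ leaves each summand invariant), after which the orthogonal sum assembles automatically.

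Next I would set up the identification of the subtree with $\Gamma_k$. By the construction in Subsection \ref{ss2.2}, each $T_{e^j_v}$ with $gen(e^j_v)=k$ is isometric to $\Gamma_k$, and under this isometry the branching function matches by \eqref{eq:2.4}. A function in $\gF_{e^j_v}$ is supported on $T_{e^j_v}$ and depends only on $|x|$; composing with the isometry sends it to a symmetric function on $\Gamma_k$, i.e.\ an element of $\gF_{\Gamma_k}$. The subtlety is the normalization: the map that sends $u\in\gF_{e^j_v}$ to its pullback on $\Gamma_k$ is, up to the constant factor coming from \eqref{eq:2.4} (the ratio $g_\Gamma(t_k+t)/g_\Gamma(t_k+)$), an isometry of $L^2$-spaces, because $\int u(x)^2\,g_{T_{e^j_v}}(t)\,\rd t$ transforms into $g_\Gamma(t_k+)\int (\text{pullback})^2 g_{\Gamma_k}(t)\,\rd t$. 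I would record this as the unitary $U_j:\gF_{e^j_v}\otimes\cz^2\to\gF_{\Gamma_k}\otimes\cz^2$, acting as the pullback tensored with the identity on $\cz^2$ and rescaled by $g_\Gamma(t_k+)^{-1/2}$.

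Then I would check that $U_j$ intertwines the operators. Since $D_c=\alpha\cdot\frac c\ri\nabla+c^2\beta$ acts edgewise by a first-order differential operator with constant (matrix) coefficients and the isometry of $T_{e^j_v}$ onto $\Gamma_k$ is a translation on each edge (it preserves arclength and orientation away from the root), the differential expression is preserved; the rescaling constant commutes with $D_c$. What needs genuine verification is that the matching conditions \eqref{eq:3.1-5}–\eqref{eq:3.1-7} of $D_c$ at the vertices inside $T_{e^j_v}$ restrict exactly to the matching conditions defining $D_k$ on $\Gamma_k$, and — the one point that requires care — that at the vertex $v$ itself a symmetric function in $\gF_{e^j_v}$ already satisfies (or is compatible with) the root condition $f(o)=0$ transported to the root of $\Gamma_k$, or rather the appropriate vertex condition there; here one uses the observation recalled before Lemma \ref{lm:3.2-1} that $\gF_{e^{<b(v)>}_v}$ coincides with $\gF_v$ and that passing to $\gF'_v$ removes exactly the component that would otherwise fail the condition. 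I expect \emph{this} — showing that $D_c$ genuinely reduces along $\gF_{e^{<j>}_v}\otimes\cz^2$, including correct behavior of the spinor matching data \eqref{eq:3.1-7} at $v$ under the $b_k$-fold symmetry — to be the main obstacle; the rest is bookkeeping. Once the reduction and the intertwining $U_j\,(D_c\upharpoonright\gF_{e^{<j>}_v}\otimes\cz^2) = \cD_k\,U_j$ are established for each $j=1,\dots,b_k-1$, taking $U=\bigoplus_j U_j$ gives the claimed unitary equivalence with the orthogonal sum of $(b_k-1)$ copies of $\cD_k$.
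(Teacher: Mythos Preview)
Your opening step---invoke Lemma~\ref{lm:3.2-1} and distribute the tensor product with $\cz^2$---is exactly what the paper does, and in fact the paper's proof stops right there: it records the decomposition $\gF'_v\otimes\cz^2=\bigoplus_{j=1}^{b_k-1}(\gF_{e^{<j>}_v}\otimes\cz^2)$ and leaves the identification of each summand with $\cD_k$, and the reduction of $D_c$, entirely to the reader (implicitly to the analogous Laplacian argument in \cite{NaimarkSolomyak2001}). Your proposal is therefore strictly more detailed than the paper's own proof.

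One genuine slip to fix: in your second paragraph you drop the angle brackets and start working with $\gF_{e^j_v}$ instead of $\gF_{e^{<j>}_v}$. These are not the same objects. The $\gF_{e^j_v}$ are supported on the single subtree $T_{e^j_v}$, and your pullback-along-the-isometry map $U_j$ is correct \emph{for them}; but $D_c$ does \emph{not} reduce along $\gF_{e^j_v}\otimes\cz^2$, because the matching condition \eqref{eq:3.1-5} at $v$ couples all $b_k$ outgoing edges. The spaces that actually appear in Lemma~\ref{lm:3.2-1} are the discrete-Fourier-rotated combinations $\gF_{e^{<j>}_v}$ (in the sense of \cite{NaimarkSolomyak2001}); a function there is supported on all of $T_v$, taking the value $\zeta^{js}\phi(|x|)$ on $T_{e^s_v}$ with $\zeta=e^{2\pi i/b_k}$. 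It is precisely this rotation that diagonalizes the coupling at $v$ and makes the reduction work---and, for $j<b_k$, forces the radial profile to vanish at $v$, which is what matches the root condition of $\cD_k$. So your $U_j$ should be built on $\gF_{e^{<j>}_v}$ (e.g.\ read off the common profile $\phi$ and renormalize by $\sqrt{b_k}$), not as a single-subtree pullback. With that correction the outline is sound, and your identification of the vertex condition at $v$ as the crux is exactly right.
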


\begin{proof}
According to Lemma \ref{lm:3.2-1} and the distributive law, we can get that
\begin{multline}
 \label{mu:3.2-1}
 \gF{^\prime_v}\otimes\cz^2 = (\gF_{e{^{<1>}_v}}\oplus...\oplus\gF_{e{^{<b_k-1>}_v}})\otimes\cz^2 \\ = (\gF_{e{^{<1>}_v}}\otimes\cz^2)\oplus...\oplus(\gF_{e{^{<b_k-1>}_v}}\otimes\cz^2)
\end{multline}
\end{proof}

Our next step is to understand the nature of each operator $\cD_k$. Below we introduce a family ${M_k}, k=0, 1, ...$ of operators acting in $L^2(t_k, h(\Gamma))\otimes\cz^2.$  \\
The operator $M_k$ acts on the domain $Dom(M_k)$ as
\begin{equation}
 \label{eq:3.2-3}
  (M_k\psi)(t)=\alpha \cdot \frac c\ri\psi'(t) + c^2\beta \psi(t),\  t\ne t_j,\  j\ge k.
\end{equation}
Denote $I_j=(t_{j-1}, t_j),\  j\in \nz$. $Dom(M_k)$ consists of all functions $\psi$ on $[t_j, h(\Gamma))$,
$\psi=\begin{pmatrix}
\psi_1 \\
\psi_2
\end{pmatrix}$
\quad s. t. $\psi\upharpoonright I_j \in H^1(I_j)\otimes\cz^2 \quad \forall j>k$,
\begin{equation}
 \label{eq:3.2-4}
 \sum_{j>k}\int_{I_j}(|\psi'|^2+|\psi|^2)\rd t< \infty;
\end{equation}
and the following boundary condition at $t_k$ and the matching conditions at the points $t_j, \ j>k$ are satisfied:
\begin{equation}
 \label{eq:3.2-5}
\begin{pmatrix}
\psi_1(t_j+) \\
\psi_1(t_j-)
\end{pmatrix}
=\tensor B \begin{pmatrix}
-\psi_2(t_j+) \\
\psi_2(t_j-)
\end{pmatrix},\
\tensor B=\begin{pmatrix}
 b{^{1/2}_j} & 0 \\
 0 & -b{^{-1/2}_j}
\end{pmatrix}.
\end{equation}
Without loss of generality, we can consider the positive parts
$$M{^+_k}:=\chi_{(0,\infty)}(M_k)M_k;\quad B_k:=\chi_{(0,\infty)}(\cD_k)\cD_k.$$
Next we will show that $B_k \sim M{^+_k}$ for each $k$; here and in the sequel the symbol "$\sim$" stands for the unitary equivalence. \\

\begin{lemma}
 \label{lm:3.2-2}
 For any $k=0, 1, ...$
\begin{equation}
 \label{eq:3.2-7}
 B_k  \sim M{^+_k}
\end{equation}
\end{lemma}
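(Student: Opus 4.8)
The plan is to realize $\cD_k$ concretely as a Dirac-type operator on the single interval $(t_k,h(\Gamma))$, by exploiting that functions in $\gF_{\Gamma_k}\otimes\cz^2$ depend only on $|x|$, and then to push the resulting unitary equivalence through the functional calculus. A function $u\in\gF_{\Gamma_k}\otimes\cz^2$ has the form $u(x)=\phi(|x|)$ for a $\cz^2$-valued $\phi$, and the layer-cake identity $\int_{\Gamma_k}|u|^2\,\rd x=\int_0^{h(\Gamma_k)}|\phi(t)|^2 g_{\Gamma_k}(t)\,\rd t$ (the case $|u|^2\equiv 1$ of which is the formula $|\Gamma|=\int_\Gamma g_\Gamma(t)\rd t$) identifies $\gF_{\Gamma_k}\otimes\cz^2$ with the weighted space $L^2\big((0,h(\Gamma_k)),g_{\Gamma_k}\,\rd t\big)\otimes\cz^2$. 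Composing this with the shift $t\mapsto t-t_k$ and the weight-straightening substitution $\psi(t)=\big(g_\Gamma(t)/g_\Gamma(t_k+)\big)^{1/2}\phi(t)$, legitimate by \eqref{eq:2.4}, yields a unitary
\[
 U_k\colon\ \gF_{\Gamma_k}\otimes\cz^2\ \longrightarrow\ L^2(t_k,h(\Gamma))\otimes\cz^2 .
\]

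The second step is to verify that $U_k\cD_k U{^*_k}=M_k$. On the interior of each segment $I_j$, $j>k$, the branching function $g_{\Gamma_k}$ is locally constant by \eqref{eq:2.1-2} and \eqref{eq:2.4}, so conjugating the Dirac expression $\alpha\cdot\frac c\ri\nabla+c^2\beta$ by multiplication by that positive constant leaves it unchanged; this reproduces the differential action \eqref{eq:3.2-3}, and the square-summability condition \eqref{eq:3.2-4} is exactly the transport under $U_k$ of the requirement $u\in\mathrm{Dom}(\cD_k)$. At a vertex of $\Gamma_k$ lying at height $t_j$, symmetry forces a single common trace on the $b_j$ outgoing edges, so the Dirac matching relations \eqref{eq:3.1-5}--\eqref{eq:3.1-7} for $D_k$ collapse to the two scalar identities ``continuity of the first component'' and ``the incoming second component equals $b_j$ times the common outgoing one''; multiplying these through by the one-sided values of $g_{\Gamma_k}^{1/2}$, which jumps by the factor $b_j^{1/2}$ across $t_j$, turns them into the matching condition \eqref{eq:3.2-5} with the matrix $\tensor B$ displayed there. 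Likewise the boundary condition at the root of $\Gamma_k$ is carried by $U_k$ to the boundary condition at $t_k$ recorded in $\mathrm{Dom}(M_k)$. Having established $U_k\cD_k U{^*_k}=M_k$, one passes to positive parts: since the Borel functional calculus intertwines with any unitary, $U_k\,\chi_{(0,\infty)}(\cD_k)\cD_k\,U{^*_k}=\chi_{(0,\infty)}(M_k)M_k$, that is $B_k\sim M{^+_k}$, which is \eqref{eq:3.2-7}.

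The main obstacle is not conceptual but a matter of careful bookkeeping at the vertices: one must track the sign conventions encoded in the boundary-value vectors \eqref{eq:3.1-7} and in the matrices $\mathbb{A}^{(k)},\mathbb{B}^{(k)}$ so that the two collapsed scalar relations, after the weight rescaling, land \emph{exactly} on the matrix $\tensor B=\mathrm{diag}(b_j^{1/2},-b_j^{-1/2})$ of \eqref{eq:3.2-5}, signs and fractional powers included. A secondary point deserving attention is that $U_k$ should map $\mathrm{Dom}(\cD_k)$ \emph{onto} $\mathrm{Dom}(M_k)$, not merely intertwine the formal differential expressions; this reduces to the fact that $\gF_{\Gamma_k}\otimes\cz^2$ is genuinely a reducing subspace for $D_k$ --- equivalently, that the symmetrization projection commutes with $D_k$, which is precisely the structural consequence of the regularity of $\Gamma$ already used in Theorem \ref{thm:3.2-1} and Lemma \ref{lm:3.2-1} --- so that the transported domain description is complete.
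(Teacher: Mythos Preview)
Your proposal is correct and follows essentially the same route as the paper: identify radial functions in $\gF_{\Gamma_k}\otimes\cz^2$ with functions on $(t_k,h(\Gamma))$ via the weighted $L^2$ identity, straighten the weight by the substitution $\psi=\sqrt{g_{\Gamma_k}}\,\phi$, and check that the Dirac matching conditions transform into \eqref{eq:3.2-5}. The only organizational difference is that the paper first argues separately (via an $SU(2)$ gauge at each vertex) that the boundary conditions are permutation-invariant, then carries out the substitution explicitly for $k=0$ and appeals to that invariance for general $k$, whereas you treat all $k$ uniformly and absorb the reducing-subspace issue into the remark that the symmetrization projection commutes with $D_k$.
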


\begin{proof}
\begin{trivlist}
\item[(i)]
First we can construct a block diagonal matrix $U^{(k)}$ of elements of $SU(2)$,
\begin{equation}
  \label{eq:3.2-15}
  U^{(k)}=
  \begin{pmatrix}
  \re^{\ri\theta_1\begin{pmatrix}
            0 & -\ri \\
            \ri & 0
            \end{pmatrix}} &   &   \\
    & \ddots &   \\
    &        & \re^{\ri\theta_{v_k}\begin{pmatrix}
            0 & -\ri \\
            \ri & 0
            \end{pmatrix}}
  \end{pmatrix}_{v_k\times v_k}
\end{equation}
where $\theta_s, s=1,...,v_k$ are selected randomly at each vertex. \\
Let \begin{equation}
  \label{eq:3.2-16}
  \mathbb{\hat{A}}^{(k)}=U^{(k)}\mathbb{A}^{(k)}(U^{(k)})^{-1}
  \end{equation}
\begin{equation}
  \label{eq:3.2-17}
  \mathbb{\hat{B}}^{(k)}=U^{(k)}\mathbb{B}^{(k)}(U^{(k)})^{-1}.
\end{equation}
Then we can get that $\mathbb{\hat{A}}^{(k)}$ and $\mathbb{\hat{B}}^{(k)}$ are satisfied (\ref{eq:3.1-5}) according to \cite[Section 6]{BolteHarrison2003}. \\
Thus $\cD_k$ with the boundary condition (\ref{eq:3.1-5}) is permutation-invariant.\\
\item[(ii)] To prove $B_0  \sim M{^+_0}$ \\
To identify a function $f=\begin{pmatrix}
f_1 \\
f_2
\end{pmatrix} \in \gF_\Gamma\otimes\cz^2$ with the function $\phi_f=\begin{pmatrix}
\phi_{f_1} \\
\phi_{f_2}
\end{pmatrix}$ on $\bigoplus_{j=1}^{\infty}([t_{j-1}, t_j)\otimes\cz^2)$ satisfying $f(x)=\phi_f(t) \  \forall x\in\Gamma \ $ s. t. $|x|=t$. Then
\begin{equation}
 \label{eq:3.2-8}
 \sum_{\sigma=1}^2\int_\Gamma|f_\sigma(x)|^2\rd x=\sum_{\sigma=1}^2\int_0^{h(\Gamma)}|\phi_{f_\sigma}(t)|^2g_\Gamma(t)\rd t, \ \ \forall f=\begin{pmatrix}
f_1 \\
f_2
\end{pmatrix} \in \gF_\Gamma\otimes\cz^2.
\end{equation}
If $f(o)=0$, then $\phi_f$ is continuous on $[0, t_1)\otimes\cz^2$, satisfies $\phi_f(0)=0$. \\
Now we substitute $\begin{pmatrix}
\psi_1(t) \\
\psi_2(t)
\end{pmatrix}=\begin{pmatrix}
\sqrt{g_\Gamma(t)}\phi_1(t) \\
\sqrt{g_\Gamma(t)}\phi_2(t)
\end{pmatrix}$, for $t\ne t_1, t_2, ...$. \\
Clearly $\psi(0)=0$ (because $\phi=\begin{pmatrix}
\phi_1 \\
\phi_2
\end{pmatrix}\in H^1\big([0, h(\Gamma))\big)\otimes\cz^2$, $\phi(0)=0$); at any point $t_j,\  j\in \nz$, the function $\psi$ meets the matching condition
$$\begin{pmatrix}
\psi_1(t_j+) \\
\psi_1(t_j-)
\end{pmatrix}
=\tensor B \begin{pmatrix}
-\psi_2(t_j+) \\
\psi_2(t_j-)
\end{pmatrix}$$
which comes from the continuity of $\phi$ on $\bigoplus_{j=1}^{\infty}([t_{j-1}, t_j)\otimes\cz^2)$. \\
All these above show that $\|f\|_{L^2(\Gamma)}^2=\|\psi\|_{L^2(0, h(\Gamma))}^2$.\\
It follows that $B_0  \sim M{^+_0}$ with the boundary condition (\ref{eq:3.1-5}). \\
\item[(iii)] Without loss of generality, we can get $B_k  \sim M{^+_k}$ following (i) and (ii).
\end{trivlist}
\end{proof}

The outcome of the analysis in this section is the following result. Below $M{^+_k}^{[r]}$ stands for the orthogonal sum of $r$ copies of a self-adjoint operator $M{^+_k}$.
\\
\begin{theorem}
 \label{thm:3.2-3}
 Let $\Gamma$ be the regular tree with the generating sequences $\{b_n\}$ $(with\ \ b_0=1)$ and $\{t_n\}$. Then
 \begin{equation}
 \label{eq:3.2-12}
 B_c \sim M{^+_0} \oplus \sum_{k=1}^{\infty}\oplus M{^+_k}^{[b_0...b_{k-1}(b_k-1)]}
 \end{equation}
 where $B_c:=[\chi_{(0,\infty)}(D_c)]D_c$.
\end{theorem}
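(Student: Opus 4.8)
The plan is to assemble Theorem \ref{thm:3.2-3} from the three structural results already established, so that the proof is essentially bookkeeping of unitary equivalences along the orthogonal decomposition. First I would invoke Theorem \ref{thm:3.2-1} to write
\begin{equation*}
  L^2(\Gamma)\otimes\cz^2 = \left(\gF_\Gamma\otimes\cz^2\right)\oplus\sum_{v\in V(\Gamma)}\oplus\left(\gF{^\prime_v}\otimes\cz^2\right),
\end{equation*}
and observe that this decomposition reduces $D_c$ (the matching conditions at each vertex are built precisely so that $D_c$ leaves $\gF_\Gamma$ and each $\gF{^\prime_v}$ invariant), hence it also reduces the positive part $B_c=[\chi_{(0,\infty)}(D_c)]D_c$. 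Consequently $B_c$ is the orthogonal sum of $B_c\upharpoonright(\gF_\Gamma\otimes\cz^2)$ and the pieces $B_c\upharpoonright(\gF{^\prime_v}\otimes\cz^2)$ over all $v\in V(\Gamma)$.

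Next I would identify each summand. The piece on $\gF_\Gamma\otimes\cz^2$ is by definition $B_0=[\chi_{(0,\infty)}(\cD_0)]\cD_0$, and Lemma \ref{lm:3.2-2} (case $k=0$) gives $B_0\sim M{^+_0}$. For a fixed vertex $v$ with $gen(v)=k>0$, Theorem \ref{thm:3.2-2} shows $D_c\upharpoonright(\gF{^\prime_v}\otimes\cz^2)$ is unitarily equivalent to $(b_k-1)$ copies of $\cD_k$, so passing to positive parts yields $B_c\upharpoonright(\gF{^\prime_v}\otimes\cz^2)\sim\cD_k^{+\,[b_k-1]}$, and then Lemma \ref{lm:3.2-2} upgrades this to $(M{^+_k})^{[b_k-1]}$. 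Finally I would count the vertices of each generation: by the regularity of $\Gamma$ and \eqref{eq:2.1-2}, the number of vertices $v$ with $gen(v)=k$ equals $b_0\cdots b_{k-1}$ (for $k\ge 1$). Summing the contributions $(M{^+_k})^{[b_k-1]}$ over these $b_0\cdots b_{k-1}$ vertices gives $(M{^+_k})^{[b_0\cdots b_{k-1}(b_k-1)]}$, and assembling over $k$ produces exactly \eqref{eq:3.2-12}.

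The one point that needs genuine care — and which I expect to be the main obstacle — is justifying that the decomposition \eqref{eq:3.2-1} actually reduces $D_c$, i.e. that $D_c$ maps (the intersection of its domain with) each $\gF{^\prime_v}\otimes\cz^2$ into itself. Heuristically this is clear: $D_c$ is a first-order differential operator acting edgewise, so it preserves the property "$u(x)$ depends only on $|x|$ within $T$" and "$u\equiv 0$ outside $T$"; the subtlety is at the vertices, where one must check that a function in $\gF{^\prime_v}$, after differentiation, still satisfies the Bolte--Harrison matching conditions \eqref{eq:3.1-5}. Here the symmetry argument from part (i) of the proof of Lemma \ref{lm:3.2-2} is the relevant tool: the matching conditions are permutation-invariant among the edges $e^j_v$, so the orthogonal projections onto the symmetric subspace $\gF_v$ and its complement $\gF{^\prime_v}$ commute with $D_c$. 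A careful statement would note that $\gF{^\prime_v}$ is spanned by functions supported on $T_v$ that are "balanced" (sum zero over the $b_k$ branches), and that both the differential action and the vertex conditions respect this balanced/symmetric splitting.

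I would therefore present the proof as: (1) quote Theorem \ref{thm:3.2-1} and note the decomposition reduces $D_c$ hence $B_c$, with the permutation-invariance from Lemma \ref{lm:3.2-2}(i) as the justification; (2) apply Lemma \ref{lm:3.2-2} to the $\gF_\Gamma$-summand and apply Theorem \ref{thm:3.2-2} followed by Lemma \ref{lm:3.2-2} to each $\gF{^\prime_v}$-summand; (3) count $\#\{v: gen(v)=k\}=b_0\cdots b_{k-1}$ using \eqref{eq:2.1-2} and collect multiplicities. Steps (2) and (3) are routine; step (1)'s reduction claim is where the substance lies, though it is already implicit in the constructions of Section \ref{sec3}.
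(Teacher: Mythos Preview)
Your proposal is correct and follows essentially the same route as the paper: decompose $L^2(\Gamma)\otimes\cz^2$ via Theorem~\ref{thm:3.2-1}, identify each summand using Theorem~\ref{thm:3.2-2} and Lemma~\ref{lm:3.2-2}, and count the vertices of each generation to obtain the multiplicities $b_0\cdots b_{k-1}(b_k-1)$. The paper's own proof is terser---it reconstructs the explicit branch-Fourier unitary $U$ from \cite{NaimarkSolomyak2001} in place of simply citing Theorem~\ref{thm:3.2-2}, and it leaves the reduction step you carefully flag entirely implicit---but the logical skeleton is the same.
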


\begin{proof}
Following the content of \cite{NaimarkSolomyak2001} from Formula (2.7) to Formula (2.8), we can construct
\begin{equation}
  \label{eq:3.2-13}
  U=
  \begin{pmatrix}
  1 &   &   &   \\
    & \re^{s(2\pi\ri)/b_k} &   &   \\
    &   & \ddots &   \\
    &   &   & \re^{(b_k-1)s(2\pi\ri)/b_k}
  \end{pmatrix}_{b_k\times b_k}b_k^{-1/2}
\end{equation}
where $s=1,...,b_k$.\\
According to \cite[Section 2]{NaimarkSolomyak2001}, we get $$D_c\upharpoonright\gF{^\prime_v}\otimes\cz^2=U\cD_kU^{-1}.$$
Thus \begin{equation}
  \label{eq:3.2-14}
   D_c\upharpoonright\gF{^\prime_v}\otimes\cz^2 \sim \cD_k^{[b_0...b_{k-1}(b_k-1)]}.
  \end{equation}
  According to Lemmata \ref{lm:3.2-1} and \ref{lm:3.2-2}, we get Formula (\ref{eq:3.2-12}).
\end{proof}

\quad

\section{THE DIRAC OPERATOR ON REGULAR TREES OF INFINITE HEIGHT}
\label{sec4}
Our next result is quite elementary and its proof is standard. The result applies to Laplacian rather than to Dirac operator only, see \cite{Solomyak2002}. Still, below we formulate only the case of Dirac operator we are interested in in this paper.\\
According to the theorem of Friedrichs extension, we know that $B_c$ is self-adjoint on $\gD$. Following this, we can get the below theorem.

\begin{theorem}
 \label{thm:4.1}
 Let $\Gamma$ be a regular tree and $\sup_{e\in E(\Gamma)}|e|=\infty$. \\ Then $\sigma(B_c)=[c^2, \infty)$.
\end{theorem}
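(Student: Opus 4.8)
The plan is to reduce the statement to a spectral computation on the model operators $M_k^+$ via Theorem \ref{thm:3.2-3}, and then exploit the hypothesis $\sup_{e\in E(\Gamma)}|e|=\infty$ to produce approximate eigenfunctions (Weyl sequences) for every point of $[c^2,\infty)$. First I would recall the free one-dimensional Dirac operator $D_{\mathrm{free}}=\alpha\cdot\frac{c}{\mathrm i}\frac{\mathrm d}{\mathrm dt}+c^2\beta$ on $L^2(\mathbb R)\otimes\mathbb C^2$; a routine Fourier computation gives $\sigma(D_{\mathrm{free}})=(-\infty,-c^2]\cup[c^2,\infty)$, hence its positive part $\chi_{(0,\infty)}(D_{\mathrm{free}})D_{\mathrm{free}}$ has spectrum exactly $[c^2,\infty)$, with generalized eigenfunctions $\mathrm e^{\mathrm i k t}\,u(k)$ of energy $E(k)=\sqrt{c^4+c^2k^2}\ge c^2$. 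Everything then comes down to showing that $B_c$ "contains" this free spectrum and contains nothing below $c^2$.

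The lower bound $\sigma(B_c)\subset[c^2,\infty)$ is immediate: by construction $B_c=\chi_{(0,\infty)}(D_c)D_c$ is nonnegative, but more precisely on each edge the symbol of $D_c$ is $\alpha c\xi+c^2\beta$, whose eigenvalues are $\pm\sqrt{c^4+c^2\xi^2}$; the matching conditions \eqref{eq:3.1-5} are self-adjoint and do not enlarge the spectrum into the gap $(-c^2,c^2)$. So $\sigma(D_c)\cap(-c^2,c^2)=\emptyset$ (this is the standard gap argument for Dirac operators on graphs, and one can also read it off from $B_k\sim M_k^+$ together with the explicit form of $M_k$), hence $\sigma(B_c)\subseteq\{0\}\cup[c^2,\infty)$; since $B_c$ is the positive part, $0$ is removed and $\sigma(B_c)\subseteq[c^2,\infty)$. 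For the reverse inclusion, fix $E\ge c^2$ and the corresponding $k\ge 0$ with $E=\sqrt{c^4+c^2k^2}$. By hypothesis choose a sequence of edges $e_n$ with $|e_n|=L_n\to\infty$. On the interior of $e_n$, identified with $(0,L_n)$, I would place a normalized, smoothly cut-off plane wave $\psi_n(t)=L_n^{-1/2}\varphi(t/L_n)\,\mathrm e^{\mathrm i k t}\,u(k)$, where $\varphi\in C_c^\infty(0,1)$ is fixed with $\|\varphi\|_{L^2}=1$ and $u(k)$ is the unit eigenvector of $\alpha c k+c^2\beta$ for the eigenvalue $+E$; extend $\psi_n$ by zero to all of $\Gamma$. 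Then $\psi_n$ lies in the form domain, $\|\psi_n\|=1$, $\psi_n\rightharpoonup 0$ weakly (its support escapes every compact set), and a direct estimate gives $\|(B_c-E)\psi_n\|\le \|D_{\mathrm{free}}-E)(\varphi\,\mathrm e^{\mathrm i k\cdot}u(k))\|$-type terms scaled by $L_n^{-1}$, hence $\|(B_c-E)\psi_n\|=O(L_n^{-1})\to 0$. (One must check that the cutoff keeps $\psi_n$ in the domain, i.e. that it vanishes near the endpoints of $e_n$ so the vertex matching conditions are trivially satisfied — this is why $\varphi$ is compactly supported in the open interval.) By Weyl's criterion $E\in\sigma_{\mathrm{ess}}(B_c)\subseteq\sigma(B_c)$, and since $E\in[c^2,\infty)$ was arbitrary, $[c^2,\infty)\subseteq\sigma(B_c)$. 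Combining the two inclusions yields $\sigma(B_c)=[c^2,\infty)$.

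The main obstacle, and the step I would spend the most care on, is the gap statement $\sigma(D_c)\cap(-c^2,c^2)=\emptyset$ on the tree: one has to verify that imposing the Bolte–Harrison matching conditions \eqref{eq:3.1-5} at the (infinitely many) vertices does not create spectrum in the mass gap. The clean way is to invoke Theorem \ref{thm:3.2-3}: $B_c$ is unitarily equivalent to a direct sum of the operators $M_k^+$, and each $M_k$ is a Dirac-type operator on a half-line $(t_k,h(\Gamma))$ with the point matching conditions \eqref{eq:3.2-5}; since $\tensor B$ in \eqref{eq:3.2-5} is exactly the transfer condition coming from continuity of $\phi$ (as shown in Lemma \ref{lm:3.2-2}), each $M_k$ is self-adjoint and, being locally the free Dirac operator, has no spectrum in $(-c^2,c^2)$ — a transfer-matrix or quadratic-form argument at each $t_j$ shows the gap is preserved. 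Then $\sigma(B_c)=\bigcup_k\sigma(M_k^+)\subseteq[c^2,\infty)$, closed because it is a (countable) union contained in a closed set together with the Weyl-sequence density just established. With $h(\Gamma)=\infty$ forced by $\sup|e_n|=\infty$, each $M_k$ lives on a genuine half-line, so there is no additional discrete spectrum to worry about, and the argument closes.
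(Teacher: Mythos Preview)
Your Weyl--sequence construction for the inclusion $[c^2,\infty)\subset\sigma(B_c)$ is essentially the paper's argument, except that the paper builds the sequence for the scalar Laplacian $\Delta$ rather than for $D_c$ directly: it takes $f_m(t)=l_m^{-1/2}\mathrm e^{\mathrm i r t}\eta(t/l_m)$ on an edge of length $l_m\to\infty$, checks $\|(\Delta-r^2)f_m\|\to 0$, and concludes $\sigma(\Delta)=[0,\infty)$. Your spinor version with $u(k)$ is an equally valid variant.

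The real divergence is in the \emph{other} inclusion. The paper does not argue the gap $\sigma(D_c)\cap(-c^2,c^2)=\emptyset$ at all. Instead, having $\sigma(\Delta)=[0,\infty)$ in hand, it invokes Appendix~\ref{sa1}, where the Foldy--Wouthuysen--type unitary $\Phi$ shows $B_0=E_c(\p)=\sqrt{c^2\p^2+c^4}$, and then applies the spectral mapping theorem: $\sigma(B_c)=E_c\bigl(\sigma(-\Delta)\bigr)=E_c\bigl([0,\infty)\bigr)=[c^2,\infty)$. Both inclusions drop out simultaneously from this functional relation, and no separate gap analysis is needed.

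Your route through Theorem~\ref{thm:3.2-3} and the operators $M_k$ is in principle workable, but the step you yourself flag as ``the main obstacle'' is not actually carried out: the claim that the matching conditions \eqref{eq:3.1-5} (respectively \eqref{eq:3.2-5} for $M_k$) ``do not enlarge the spectrum into the gap'' is not automatic --- point interactions for one--dimensional Dirac operators \emph{can} produce eigenvalues in $(-c^2,c^2)$, and a genuine transfer--matrix or form argument would have to be written down. The paper's detour through $\Delta$ and the identity $B_0=E_c(\p)$ is precisely what lets it sidestep this issue.
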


\begin{proof}
\begin{trivlist}
\item[(i)] To prove $\sigma(\Delta)=[0, \infty)$. 

It is enough to show that for any $r>0$ the point $\lambda=r^2$ belongs to the spectrum. For this purpose we fix a non-negative function $\eta\in C_0^\infty[1, 2]$ s. t. $\supp\{\eta\}\in [1, 2]$. Further, choose an edge $e_m\in E(\Gamma)$. In an appropriate coordinate system, $e_m$ can be identified with the interval $(0, l_m)$ where $l_m=|e_m|=2^m$. The function $f_m$ on $\Gamma$,
\begin{equation}
  \label{eq:4.1}
  f_m(t)=\left\{
   \begin{array}{ll}
     \frac{\re^{\ri rt}}{l_m^{1/2}}\eta(\frac t{l_m}) &\mathrm{on} \quad e_m, \\
              0                                       &\mathrm{otherwise},
   \end{array}
   \right.
\end{equation}
belong to $Dom(\Delta)$.\\ 
An elementary calculation shows that $\|(\Delta-\lambda)f_m\|\to0$, as $m\to\infty$.\\
According to the Weyl criterion, we can get $\lambda\in\sigma(\Delta)$.\\
Thus $\sigma(\Delta)=[0, \infty]$.\\
\item[(ii)] Following Appendix \ref{sa1} and spectral mapping theorem (See \cite{Lax2002}), we know that $$\sigma(B_c)=[c^2, \infty).$$
\end{trivlist}
\end{proof}

\quad

\appendix
\section{The Proof of $B_0= E_c(\p)$\label{sa1}}

Any $\psi\in \gH:= [\chi_{(0,\infty)}(D_c)](L^2(\Gamma)\otimes\cz^2)$ can be written as
\begin{equation}
  \label{eq:6}
  \displaystyle
  \psi :=
  \begin{pmatrix}
    {E_c(\p)+c^2\over N_c(\p)}u\\
    {c\p \alpha\over N_c(\p)}u
  \end{pmatrix}
\end{equation}
for some $u\in \gh:=L^2(\Gamma)$. Here,

$$\p:=-\ri\nabla,\ \ E_c(\p):= (c^2\p^2+c^4)^{1/2},\
N_c(\p):=[2E_c(\p)(E_c(\p)+c^2)]^{1/2}.$$  In fact, the map
\begin{equation}
  \label{eq:7}
  \begin{split}
    \Phi: \gh &\rightarrow \gH\\
  u&\mapsto
  \begin{pmatrix}
    \Phi_1u\\
    \Phi_2u
  \end{pmatrix}
:=
\begin{pmatrix}
    {E_c(\p)+c^2\over N_c(\p)}u\\
    {c\p \alpha\over N_c(\p)}u
  \end{pmatrix}
\end{split}
\end{equation}
embeds $\gh$ unitarily into $\gH$ and its restriction onto
$H^1(\Gamma)$ is also a unitary mapping to $\gH\cap
H^1(\Gamma)\otimes\cz^2$ (Evans et al. \cite{Evansetal1996}).

\begin{lemma}
  \label{a3}
  To prove that
  $$B_0= E_c(\p).$$
\end{lemma}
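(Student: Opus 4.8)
The plan is to exploit the intertwining map $\Phi:\gh\to\gH$ from \eqref{eq:7}, which by the cited result of Evans et al. is unitary and carries $H^1(\Gamma)$ onto $\gH\cap H^1(\Gamma)\otimes\cz^2$. The operator $B_0 = \chi_{(0,\infty)}(D_c)D_c$ acts on $\gH$; by definition $B_0$ is the restriction of $D_c$ to its positive spectral subspace, so $B_0 = |D_c|$ on $\gH$. Since $D_c^2 = (\alpha\cdot\tfrac{c}{\ri}\nabla + c^2\beta)^2$ and the Dirac matrices satisfy $\alpha^2=\beta^2=I$, $\alpha\beta+\beta\alpha=0$, the cross terms cancel and one gets $D_c^2 = c^2\p^2 + c^4 = E_c(\p)^2$ on the appropriate domain (this is exactly where the root boundary condition $f(o)=0$ and the matching conditions \eqref{eq:3.1-5} are used, to ensure the squaring is valid as an operator identity and not merely formally). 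Hence $|D_c| = E_c(\p)$ as positive self-adjoint operators, and the claim will follow once I check that $\Phi$ conjugates $B_0$ (acting on the spinor space $\gH$) into $E_c(\p)$ (acting on the scalar space $\gh$).

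First I would record the algebraic identity $D_c\,\Phi u = \Phi\,E_c(\p)u$ for $u\in H^1(\Gamma)$: writing out $D_c$ applied to the two-component vector $\bigl({E_c(\p)+c^2\over N_c(\p)}u,\ {c\p\alpha\over N_c(\p)}u\bigr)^T$ and using $\alpha^2=I$, $\p$ commutes with $E_c(\p)$, and the definition $N_c(\p)^2 = 2E_c(\p)(E_c(\p)+c^2)$, the two components collapse to ${E_c(\p)+c^2\over N_c(\p)}E_c(\p)u$ and ${c\p\alpha\over N_c(\p)}E_c(\p)u$ respectively, i.e. to $\Phi(E_c(\p)u)$. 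Second, since $\Phi$ is unitary from $\gh$ onto $\gH$ and from $H^1(\Gamma)$ onto $\gH\cap H^1(\Gamma)\otimes\cz^2 = Dom(D_c)\cap\gH$, the intertwining relation extends from the core $H^1(\Gamma)$ to the full domains by closure, giving $\Phi^{-1}D_c\Phi = E_c(\p)$ as an identity of self-adjoint operators on $\gh$. Third, because $E_c(\p)\ge c^2 > 0$, the range of $\Phi$ lies in the positive spectral subspace of $D_c$, so on $\gH$ we have $D_c = B_0$; transporting back, $\Phi^{-1}B_0\Phi = E_c(\p)$, and identifying $\gh$ with $\gH$ via $\Phi$ yields $B_0 = E_c(\p)$.

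The main obstacle I anticipate is not the algebra — that is a routine computation with the Dirac matrices — but the domain bookkeeping at the vertices: one must verify that $\Phi$ actually respects the matching conditions \eqref{eq:3.1-5} and the root condition $f(o)=0$, i.e. that $u$ satisfies the scalar vertex conditions inherited from $Dom(M_0)$ (equivalently \eqref{eq:3.2-5}) if and only if $\Phi u$ satisfies the spinor conditions \eqref{eq:3.1-7}. This amounts to checking that the boundary forms match under the transformation $u\mapsto\Phi u$, which is where the explicit form of $\mathbb{A}^{(k)},\mathbb{B}^{(k)}$ and the compatibility relation $\mathbb{A}^{(k)}{\mathbb{B}^{(k)}}^T = \mathbb{B}^{(k)}{\mathbb{A}^{(k)}}^T$ enter; I would handle this edge by edge, reducing via Lemma~\ref{lm:3.2-2} to the half-line model $M_0^+$ where the computation is explicit. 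Once the domains are matched, self-adjointness of both sides (guaranteed by the Friedrichs construction and by the functional calculus for $\p$) makes the extension from the core automatic.
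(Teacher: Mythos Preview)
Your proposal is correct and rests on the same algebraic core as the paper, namely the explicit Foldy--Wouthuysen--type map $\Phi$ from \eqref{eq:7} together with the Dirac matrix relations $\alpha^2=\beta^2=I$, $\alpha\beta+\beta\alpha=0$. The paper, however, packages this as a single quadratic-form computation: it writes $\psi=\Phi u\in\gH$ and expands
\[
(\psi,B_0\psi)=(\Phi u,\cD_0\,\Phi u)
=\bigl(u,\,(\Phi_1^*c^2\Phi_1+\Phi_1^*c\alpha\p\,\Phi_2+\Phi_2^*c\alpha\p\,\Phi_1-\Phi_2^*c^2\Phi_2)u\bigr)
=(u,E_c(\p)u),
\]
and stops there. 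You instead establish the operator intertwining $D_c\Phi=\Phi E_c(\p)$ component-wise and then pull back via unitarity; this is the same algebra read off row by row rather than after pairing, and it yields a slightly stronger statement (an operator identity rather than a form identity).

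The substantive difference is that you devote a paragraph to the domain bookkeeping at the vertices --- checking that $\Phi$ respects the matching conditions \eqref{eq:3.1-5} and the root condition, and that the closure from the $H^1$ core is legitimate. The paper's proof is purely formal and does not touch these issues at all; your extra care is warranted but goes beyond what the paper actually supplies. Your preliminary observation that $D_c^2=E_c(\p)^2$ is true but not needed once you have the intertwining relation, so you could drop it.
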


\begin{proof}
For any $\psi\in \gH$, we can get
  \begin{multline}
   \label{eq:an3}
   (\psi,B_0\psi)=(\psi,\cD_0\psi)\\
   =\begin{pmatrix}
   {E_c(\p)+c^2\over N_c(\p)}u & {c\p \alpha\over N_c(\p)}u
   \end{pmatrix}
   \left(\begin{array}{cc}
      c^2        & c\,\alpha \p\\
      c\, \alpha \p & -c^2
   \end{array}\right)
   \begin{pmatrix}
    {E_c(\p)+c^2\over N_c(\p)}u\\
    {c\p \alpha\over N_c(\p)}u
   \end{pmatrix}\\
   =\begin{pmatrix}
   \Phi_1u & \Phi_2u
   \end{pmatrix}
   \begin{pmatrix}
    c^2\,\Phi_1u+c\,\alpha \p\,\Phi_2u\\
    c\,\alpha \p\,\Phi_1u+(-c^2)\,\Phi_2u
   \end{pmatrix}\\
   =\left(u,\left(\Phi{_1^*}c^2\Phi_1+\Phi{_1^*}c\,\alpha \p\,\Phi_2+\Phi{_2^*}c\,\alpha \p\,\Phi_1-\Phi{^*_2}c^2\Phi_2\right)u\right)\\
   =(u,E_c(\p)u).
  \end{multline}
\end{proof}

\quad

\end{document}